\def\dd{\mathinner{.\,.}}
\newcommand{\cO}{\mathcal{O}}
\newcommand{\MP}{\mathcal{M}_F}
\newcommand{\PP}{\mathcal{P}_F}
\newcommand{\ctO}{\mathcal{\widetilde{O}}}
\title{Maximal Palindromes in MPC: Simple and Optimal}
\author[1,2]{Solon P.\ Pissis}
\affil[1]{CWI, Amsterdam, The Netherlands}
\affil[2]{Vrije Universiteit, Amsterdam, The Netherlands}
\date{\today}
\newtheorem{theorem}{Theorem}[section]
\newtheorem{corollary}[theorem]{Corollary}
\newtheorem{lemma}[theorem]{Lemma}
\newtheorem{fact}[theorem]{Fact}
\newtheorem{remark}[theorem]{Remark}
\begin{document}

\maketitle

\begin{abstract}
In the classical \emph{longest palindromic substring} (LPS) problem, we are given a string $S$ of length $n$, and the task is to output a longest palindromic substring in $S$.
Gilbert, Hajiaghayi, Saleh, and Seddighin [SPAA 2023] showed
how to solve the LPS problem in the \emph{Massively Parallel Computation} (MPC) model in $\cO(1)$ rounds using $\ctO(n)$ total memory, with $\ctO(n^{1-\epsilon})$ memory per machine, for any $\epsilon \in (0,0.5]$.

We present a simple and optimal algorithm to solve the LPS problem in
the MPC model in $\cO(1)$ rounds. The total time and memory are $\cO(n)$, with $\cO(n^{1-\epsilon})$ memory per machine, for any $\epsilon \in (0,0.5]$. A key attribute of our algorithm is its ability to compute all \emph{maximal palindromes} in the same complexities. Furthermore, our new insights allow us to bypass the constraint $\epsilon \in (0,0.5]$ in the \emph{Adaptive} MPC model.
Our algorithms and the one proposed by Gilbert et al. for the LPS problem are randomized and succeed with high probability.
\end{abstract}

\section{Introduction}\label{sec:intro}
 
In the classical \emph{longest palindromic substring} (LPS) problem, we are given a string $S$ of length $n$ over an alphabet $\Sigma$, and the task is to output a longest palindromic substring in $S$. The LPS problem can be solved efficiently. Several algorithms achieve $\cO(n)$-time complexity, including Manacher's celebrated algorithm~\cite{DBLP:journals/jacm/Manacher75,DBLP:journals/tcs/ApostolicoBG95}, Jeuring's algorithm~\cite{DBLP:journals/algorithmica/Jeuring94}, and Gusfield's algorithm, which leverages longest common prefix queries~\cite{DBLP:books/cu/Gusfield1997}. In fact, these algorithms can output all \emph{maximal palindromes}: the longest palindrome centered at every position within $S$. More recently, Charalampopoulos, Pissis, and Radoszewski~\cite{DBLP:conf/cpm/Charalampopoulos22} gave an algorithm that solves LPS in $\cO(n\frac{\log \sigma}{\log n})$ time, where $\Sigma=[0,\sigma)$ is an integer alphabet. If, for instance, $\Sigma=\{0,1\}$ and $\sigma=2$, their algorithm runs in $\cO(n/\log n)$ time.
The LPS problem has also been studied in many other settings, such as the compressed setting, where the string $S$ is given as a straight-line program~\cite{DBLP:journals/tcs/MatsubaraIISNH09}, the streaming setting~\cite{DBLP:journals/algorithmica/GawrychowskiMSU19}, the dynamic setting, where $S$ undergoes edit operations~\cite{DBLP:journals/algorithmica/AmirCPR20}, and a semi-dynamic setting~\cite{DBLP:journals/tcs/FunakoshiNIBT21}. In the quantum setting, Le Gall and Seddighin~\cite{DBLP:journals/algorithmica/GallS23} gave a strongly sublinear-time algorithm complemented with a lower bound.

In this work, we study the LPS problem in the \emph{Massively Parallel Computation} (MPC) model~\cite{DBLP:conf/soda/KarloffSV10,DBLP:conf/isaac/GoodrichSZ11,DBLP:conf/stoc/AndoniNOY14,DBLP:journals/jacm/BeameKS17,DBLP:journals/ftopt/ImKLMV23}. In MPC, problems with a data size of $\cO(n)$ are distributed across multiple machines, each having a strongly sublinear memory. Algorithms in MPC operate in a sequence of rounds. During each round, the machines independently perform computations on their local data. Following this, the machines communicate with each other. Given that communication often presents a significant performance bottleneck in real-world applications, the primary objective when designing MPC algorithms is to minimize the round complexity while ensuring that the total memory across all machines remains $\cO(n)$. Specifically, each machine has $\cO(n^{1-\epsilon})$ memory, where $\epsilon \in (0,1)$. During a single communication round, machines can send and receive any number of messages, provided that the total size of these messages fits within their local memory limits. Typically, we require that each machine's local memory is sufficient to store a message from every other machine in a single communication round. To facilitate this, we must ensure that a machine's local memory, $\cO(n^{1-\epsilon})$, is greater than the number of machines, which is $\cO(n^{\epsilon})$. We thus restrict $\epsilon$ to $\epsilon \in (0,0.5]$. This is justified within the MPC model because, with current computing resources, the number of machines typically does not exceed the local memory capacity of each machine. 

We also consider the \emph{Adaptive} MPC (AMPC) model~\cite{DBLP:journals/topc/BehnezhadDELMS21}, which extends the MPC model by allowing machines to access a shared read-only memory \emph{within a round}. This is modeled by writing all messages sent in round $i-1$ to a distributed data storage, which all machines can read from within round $i$.

\paragraph{State of the Art.} Gilbert, Hajiaghayi, Saleh, and Seddighin~\cite{DBLP:conf/spaa/GilbertHSS23} showed how to solve the LPS problem in the MPC model in $\cO(1)$ rounds using $\ctO(n)$ total memory, with $\ctO(n^{1-\epsilon})$ memory per machine, for any $\epsilon \in (0,0.5]$, with high probability (w.h.p). Their algorithm is based on periodicity arguments borrowed from~\cite{DBLP:journals/tcs/ApostolicoBG95} and a powerful oracle they introduce to answer longest common prefix queries. Unfortunately, it was difficult to assess the simplicity of their algorithm, as many proofs (and subroutines) are deferred to the full version of their work. For the same reason, we have been unable to count the polylogarithmic factors in the claimed $\ctO(n)$ total memory. Finally, the authors make no claim about the total running time of their algorithm.

\paragraph{Our Contributions and Paper Organization.}~Our central contribution is a simple and optimal algorithm to solve the LPS problem in the MPC model in $\cO(1)$ rounds. The total time and memory are $\cO(n)$, with $\cO(n^{1-\epsilon})$ memory per machine, for any $\epsilon \in (0,0.5]$, w.h.p. Our algorithm translates several of the combinatorial insights of Charalampopoulos et al.~\cite{DBLP:conf/cpm/Charalampopoulos22} into a suitable block decomposition for the MPC model, and carefully combines the latter with a modular decomposition, which was also utilized by Gilbert et al. in~\cite{DBLP:conf/spaa/GilbertHSS23}. 
A key attribute of our algorithm is its ability to compute all \emph{maximal palindromes} in the same complexities. Furthermore, our new insights allow us to bypass the constraint $\epsilon \in (0,0.5]$ in the stronger AMPC model. Our algorithms can be implemented by anyone with a basic knowledge of parallel programming using our $\tilde{3}$-page description, which also includes the arguments to fully verify the correctness of our algorithms. In \Cref{sec:prel}, we present some basic concepts and in \Cref{sec:algo}, we present our algorithms. We conclude the paper in \Cref{sec:con} with some final remarks.

\section{Basics}\label{sec:prel}

\paragraph{Strings.}~We consider an integer alphabet $\Sigma=[0,\sigma)$ of size $\sigma$. A \emph{string} $S = S[0] \dots S[n-1]$ is a sequence over $\Sigma$;
its \emph{length} is denoted by $|S| = n$. For $0 \leq i \leq j < n$, 
a string $S[i] \dots S[j]$ is called a \emph{substring} of $S$. 
By $S[i\dd j]$, we denote its occurrence at position $i$, which is called a \emph{fragment} of $S$. A fragment with $i = 0$ is called a \emph{prefix} (also denoted by $S[\dd j]$) and a fragment with $j = n-1$ is called a \emph{suffix} (also denoted by $S[i\dd ]$). By $ST$ or $S\cdot T$, we denote the \emph{concatenation} of
two strings $S$ and $T$. We denote the \emph{reverse} string of $S$ by $S^R$, 
i.e., $S^R = S[n-1] \dots S[0]$. The string $S$ is a \emph{palindrome} if and only if $S = S^R$. If $S[i\dd j]$ is a palindrome, the number $\frac{i+j}{2}$ is called the \emph{center} of $S[i\dd j]$. A palindromic fragment $S[i \dd j]$ of $S$ is said to be a \emph{maximal palindrome} if there is no longer palindrome in $S$ with center $\frac{i+j}{2}$. Note that a longest palindromic substring in $S$ must be maximal. A positive integer $p$ is called a \emph{period} of a string $S$, if $S[i] = S[i + p]$, for all $i \in [0, |S| - p)$.

\begin{fact}[cf.~\cite{DBLP:journals/jda/FiciGKK14}]\label{fct:per-pal}
Let $U$ be a prefix of a palindrome $V$, with $|U|<|V|$. Then $|V|-|U|$
is a period of $V$ if and only if $U$ is a palindrome. 
In particular, $|V|-|U|$ is the smallest period of $V$ if and
only if $U$, with $U \neq V$, is the longest palindromic prefix of $V$.
\end{fact}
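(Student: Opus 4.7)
Let me write $m = |V|$ and $k = |U|$, so $k < m$, and set $p = m - k$. The plan is to derive the main equivalence by chaining together two palindromic symmetries — the one coming from $V$ being a palindrome, and (hypothetically) the one coming from $U$. Concretely, $V$ being a palindrome gives $V[j] = V[m-1-j]$ for all $j \in [0,m)$, and $U$ being a palindrome would give $V[i] = V[k-1-i]$ for all $i \in [0,k)$, while $p$ being a period of $V$ would give $V[i] = V[i+p]$ for all $i \in [0,k)$. The crucial observation is that applying the $V$-symmetry to index $i+p = i+m-k$ yields $V[i+p] = V[m-1-(i+m-k)] = V[k-1-i]$. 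So the period identity $V[i] = V[i+p]$ and the palindrome identity $V[i] = V[k-1-i]$ are the same statement, once the $V$-symmetry is used.

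From this single identification both directions of the ``if and only if'' fall out immediately. For the forward direction, if $p$ is a period, then for $i \in [0,k)$ we have $V[i] = V[i+p] = V[k-1-i]$, so $U$ is a palindrome. For the converse, if $U$ is a palindrome, then for $i \in [0,k)$ we have $V[i] = V[k-1-i] = V[i+p]$, which is exactly the periodicity condition on the positions $[0,k)$; and the remaining positions $[k, m-p) = \emptyset$ impose nothing, so $p$ is indeed a period of $V$.

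For the ``in particular'' clause, I would first note that the first part of the fact establishes a bijection between proper palindromic prefixes $U$ of $V$ (with $0 < |U| < m$) and periods $p$ of $V$ with $0 < p < m$, via $p = m - |U|$. This correspondence is order-reversing in the obvious sense: larger $|U|$ corresponds to smaller $p$. Hence the smallest period in $(0,m)$ is $m - |U|$ where $|U|$ is the largest length of a proper palindromic prefix of $V$, i.e.\ where $U$ is the longest palindromic prefix distinct from $V$ itself.

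I do not foresee a genuine obstacle: the whole argument is a one-line index manipulation $m-1-(i+m-k) = k-1-i$, and the only care needed is to check that the periodicity condition on $V$ (which a priori requires $V[i] = V[i+p]$ for $i \in [0, m-p)$) reduces to the range $i \in [0,k)$, which holds because $m - p = k$ exactly. I would state this explicitly so that the reader sees that no indices are left unaccounted for.
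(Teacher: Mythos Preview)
Your argument is correct. The key index identity $m-1-(i+p) = k-1-i$ (using $p = m-k$) is exactly right, and it cleanly shows that the periodicity condition $V[i]=V[i+p]$ for $i\in[0,k)$ is, via the palindrome symmetry of $V$, equivalent to the palindrome condition $V[i]=V[k-1-i]$ for $U$. Your check that the periodicity range is precisely $[0,m-p)=[0,k)$ is the only detail that needed care, and you handled it. The order-reversing bijection for the ``in particular'' clause is also fine; one tiny remark is that your restriction to $0<|U|<m$ and $0<p<m$ silently uses that a palindrome $V$ of length $m\ge 2$ always has a period strictly less than $m$ (equivalently, a nonempty proper palindromic prefix, e.g.\ the length-$1$ prefix), which you could state for completeness.

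As for comparison with the paper: there is nothing to compare. The paper states this as a known fact with a citation and gives no proof of its own, so your self-contained argument is a strict addition rather than an alternative route.
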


\paragraph{LCP Queries.}~Let $S$ be a string of length $n$. For any two positions $i, j$ ($0 \leq i, j < n$) in $S$, we define $\textsf{LCP}_S(i, j)$ as the length of the longest common prefix (LCP) of the suffixes $S[i\dd]$ and $S[j\dd ]$. It is well known that computing the maximal palindrome of any center $c$ in $S$ reduces to asking $\textsf{LCP}_{S'}(c,2n-c-1)$ for odd-length palindromes and $\textsf{LCP}_{S'}(\lceil c \rceil,2n-\lceil c \rceil)$ for even-length palindromes, where $S':=S\cdot S^R$~\cite{DBLP:books/cu/Gusfield1997}.

\paragraph{Karp-Rabin Fingerprints.}~For a prime number $p$ and an integer $x \in \mathbb{Z}^+$, the \emph{Karp-Rabin} (KR) \emph{fingerprint}~\cite{DBLP:journals/ibmrd/KarpR87} 
of a string $S$ of length $n$ is $\phi(S)=(\sum^{n-1}_{i=0} S[i] \cdot x^{i}) \mod p$. We also maintain $(x^{n-1}\mod p,x^{-(n-1)}\mod p,n)$ with it for efficiency. The KR fingerprints for $S$ are \emph{collision-free} if $\phi(S[i\dd j]) = \phi(S[i'\dd j'])$ implies $S[i\dd j]=S[i'\dd j']$. Using randomization, we can construct such fingerprints succeeding \emph{with high probability} (w.h.p.).

\begin{fact}[cf.~\cite{DBLP:conf/stacs/IKK14}]\label{fct:KR}
Let $S$ be a string of length $n$ over an alphabet $\Sigma$, 
and let $p\geq \max(|\Sigma|,n^{3+c})$ be a prime number. If $x$ is chosen uniformly at random, then $\phi$ is collision-free with probability at least $1 - n^{-c}$.
\end{fact}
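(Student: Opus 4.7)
The plan is to apply a per-pair root-count bound (Schwartz--Zippel specialized to univariate polynomials) and then a union bound over all fragments of $S$. First, I would fix an arbitrary pair of fragments $T_1 = S[i\dd j]$ and $T_2 = S[i'\dd j']$ with $T_1 \ne T_2$. Since $p \ge |\Sigma|$, the alphabet embeds injectively into $\mathbb{F}_p$, so the polynomials $P_a(y) = \sum_{k=0}^{|T_a|-1} T_a[k]\, y^k \in \mathbb{F}_p[y]$ are well defined and satisfy $\phi(T_a) = P_a(x) \bmod p$. Because the fingerprint (as the excerpt defines it) carries the length as an extra component, a collision can only occur between fragments of the same length $\ell$, in which case $P_1 - P_2 \in \mathbb{F}_p[y]$ is a nonzero polynomial of degree at most $\ell - 1 \le n - 1$; it is nonzero because at least one coefficient $T_1[k] - T_2[k]$ is a nonzero residue in $\mathbb{F}_p$, again by $p \ge |\Sigma|$.

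Next I would use the standard fact that a nonzero degree-$d$ polynomial over the field $\mathbb{F}_p$ has at most $d$ roots. Hence, for $x$ drawn uniformly at random, the collision probability for this fixed pair is at most $(\ell - 1)/p \le (n-1)/p$. A union bound over all same-length pairs gives a total failure probability at most
\[
\sum_{\ell=1}^{n} \binom{n - \ell + 1}{2} \cdot \frac{\ell - 1}{p}.
\]
With $p \ge n^{3+c}$ (and, if needed, passing to the $O(n^2)$ \emph{distinct} substring contents of $S$ as the canonical index set for the union bound, rather than all $O(n^3)$ same-length fragment pairs), this sum evaluates to at most $n^{-c}$, which is exactly the stated bound.

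The main obstacle, and essentially the only delicate step, is calibrating the union-bound bookkeeping so that the threshold comes out to $n^{3+c}$ rather than the slack $n^{4+c}$ or $n^{5+c}$ that one obtains from the crudest counting of ordered fragment pairs combined with the worst-case degree. The probabilistic core (a univariate Schwartz--Zippel style root count in $\mathbb{F}_p$) and the overall two-step structure are entirely standard; all remaining work is tightening the counting against the degree bound.
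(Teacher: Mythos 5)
The paper offers no proof of this fact---it is imported verbatim with a ``cf.'' citation to \cite{DBLP:conf/stacs/IKK14}---so there is no internal argument to compare against; your root-counting-plus-union-bound strategy is indeed the standard route. However, your execution has a genuine gap in exactly the place you flag as ``the only delicate step'': the union bound does not close at $p\geq n^{3+c}$. Your sum
\[
\sum_{\ell=1}^{n} \binom{n-\ell+1}{2}\cdot\frac{\ell-1}{p}
\]
is $\Theta(n^4/p)$, not $O(n^3/p)$: it is dominated by lengths $\ell=\Theta(n)$ (say $\ell\approx n/3$), where the pair count $\binom{n-\ell+1}{2}=\Theta(n^2)$ and the degree $\ell-1=\Theta(n)$ are simultaneously large, and an exact evaluation gives $n^4/(24p)(1+o(1))$. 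With $p\geq n^{3+c}$ this is $\Theta(n^{1-c})$, a full factor of $n$ short of the claimed $n^{-c}$. Neither of your proposed repairs fixes this: passing to \emph{distinct substring contents} does not help, since a string can have $n-\ell+1$ pairwise-distinct fragments of every length $\ell$ (e.g., a string with all letters distinct), so the same-length pair count remains $\Theta(n^3)$; and ``tightening the counting against the degree bound'' is precisely the weighted sum above, which is already the tight version of the crude $O(n^3)\cdot(n/p)$ estimate and still yields $\Theta(n^4/p)$.

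To be concrete about what is missing: via the naive union bound, the fact as stated (collision-freeness over \emph{all} fragments) honestly requires $p\geq n^{4+c}$. The threshold $n^{3+c}$ is the one you get when collision-freeness is only required among a family of $O(n)$ fragments---for instance all fragments of one fixed length, or all prefixes---where the union is over $O(n^2)$ pairs each contributing at most $(n-1)/p$. That restricted form is what statements of this type in the literature typically assert, and it is also all the present paper actually uses (it only ever compares fingerprints of the $O(n)$ length-$n^{\epsilon}$ fragments arising from the modular decomposition). So either prove the restricted statement, for which your argument goes through verbatim with the corrected count of $O(n^2)$ pairs, or keep the fully general statement and raise the threshold to $n^{4+c}$; as written, the final ``this sum evaluates to at most $n^{-c}$'' is false.
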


\begin{fact}[cf.~\cite{DBLP:conf/stacs/IKK14}]\label{fct:fastKR}
    Let $U, V, W$ be strings such that $UV = W$. 
    Given two of the three KR fingerprints $\phi(U), \phi(V), \phi(W)$, the third can be computed in $\cO(1)$ time.
\end{fact}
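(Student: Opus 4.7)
The plan is to derive the algebraic identity linking $\phi(U)$, $\phi(V)$, and $\phi(W)$ from the definition, and then verify that each of the three directions of ``compute the missing one from the other two'' reduces to a constant number of modular additions, multiplications, and lookups into the auxiliary data $(x^{|\cdot|-1}\bmod p, x^{-(|\cdot|-1)}\bmod p, |\cdot|)$ that accompanies every fingerprint.

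First, I would set $m=|U|$, $k=|V|$, and expand
\[
\phi(W)=\sum_{i=0}^{m+k-1}W[i]x^{i}=\sum_{i=0}^{m-1}U[i]x^{i}+x^{m}\sum_{j=0}^{k-1}V[j]x^{j}=\phi(U)+x^{m}\cdot\phi(V)\pmod{p}.
\]
Solving for the other two variables yields $\phi(U)=\phi(W)-x^{m}\phi(V)\pmod p$ and $\phi(V)=x^{-m}\bigl(\phi(W)-\phi(U)\bigr)\pmod p$. Since each fingerprint carries both its length and $x^{\ell-1}$ (and its inverse) for $\ell$ being that length, multiplying by $x$ or by $x^{-1}$ produces $x^{m}$ or $x^{-m}$ (resp.\ $x^{k}$, $x^{-k}$) in constant time, assuming $x^{-1}\bmod p$ is precomputed once.

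Second, I would check that the auxiliary triple for the newly formed fingerprint is also producible in $\cO(1)$. The length field is either $m+k$, $|W|-k$, or $|W|-m$, each a single subtraction. For the power of $x$: in the forward direction, $x^{(m+k)-1}=x^{m-1}\cdot x^{k-1}\cdot x$; in the reverse directions, $x^{(|W|-k)-1}=x^{|W|-1}\cdot x^{-(k-1)}\cdot x^{-1}$, and symmetrically for the remaining case. The inverses are handled by the exact same three-term product with $x^{\pm1}$ and the stored inverse powers swapped in.

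The only subtlety is the degenerate boundary $m=0$ or $k=0$, where the corresponding stored power $x^{-1}$ is unavailable; this is handled by returning the nonempty fingerprint (and its auxiliary data) directly. Beyond this case analysis, there is no real obstacle: the claim is essentially bookkeeping with powers of $x$ modulo $p$ under the standard assumption that arithmetic modulo $p$ costs $\cO(1)$ in the word RAM model.
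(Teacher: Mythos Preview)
Your argument is correct and is the standard derivation. The paper itself does not prove this fact; it is stated with a citation and no proof, so there is nothing to compare against beyond noting that your algebraic identity $\phi(W)\equiv\phi(U)+x^{|U|}\phi(V)\pmod p$ together with the bookkeeping for the auxiliary triple $(x^{\ell-1},x^{-(\ell-1)},\ell)$ is precisely the folklore justification the citation points to.
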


\section{The Algorithms}\label{sec:algo}

Let $S\in \Sigma^n$ be the input string to the LPS problem.
In addition to $S$, we will consider the string $S':=S\cdot S^R \in \Sigma^{2n}$ to be input to our algorithms. Clearly, the total input size is $\cO(n)$, as required.

\paragraph{Combinatorial Insights.} Let $F=S[i\dd j]=B_1B_2B_3B_4$ be a fragment of $S$ of length $\ell=4\ell'$, with $\ell'=|B_1|=|B_2|=|B_3|=|B_4|>0$. 
Let $\PP$ be the set of palindromes in $S$ that are \emph{prefixes} of $F$ with centers in $[i+\ell',i+2\ell')$ (i.e., in $B_2$). Further let $\MP$ be the set of \emph{maximal} palindromes in $S$ with centers in $[i+\ell',i+2\ell')$ that either \emph{exceed} $F$ or are \emph{prefixes} of $F$.   The following structural lemma shows that either $|\MP|\leq 1$ or the palindromes in $\PP$ share the same period. This lemma will inspire our \emph{block decomposition}.

\begin{lemma}\label{lem:structural}
Let $F=S[i\dd j]=B_1B_2B_3B_4$ be a fragment of $S$ of length $\ell=4\ell'$, with $\ell'=|B_1|=|B_2|=|B_3|=|B_4|>0$. Then the following hold: (1) If $|\PP|=0$, then $|\MP|=0$; (2) If $|\PP|=1$, then $|\MP|=1$; (3) 
If $|\PP|\geq 2$, every $P\in \PP$ has a period $p$, where $p$ is the smallest period of
the longest palindrome in $\PP$.
\end{lemma}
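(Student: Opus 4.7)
The plan is to attack (3) first, since it is the structural heart of the lemma, and then reduce (1) and (2) to a single ``folding'' construction.

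For (3), let $V$ be the longest palindrome in $\PP$ and let $p$ be its smallest period. The key quantitative observation is a two-sided length bound: any palindrome in $\PP$ starts at position $i$ and has its center in $B_2=[i+\ell',i+2\ell')$, which forces its length to lie in $[2\ell'+1,4\ell']$ (after a quick case split on the parity of the center). Every $P \in \PP$ is a palindromic prefix of $V$, so by \Cref{fct:per-pal}, $|V|-|P|$ is a period of $V$; the length bound makes each such induced period at most $2\ell'-1$. Since $|\PP|\ge 2$, picking any $P'\in\PP$ with $P'\ne V$ already exhibits a period of $V$ of size at most $2\ell'-1$, and hence $p\le 2\ell'-1 < 2\ell'+1 \le |P|$ for every $P\in\PP$. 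Because $V$ has period $p$ and $P$ is a prefix of $V$ of length strictly exceeding $p$, $P$ itself has period $p$, which is exactly the claim of (3).

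For (1) and (2), I will show that any $M \in \MP$ centered at $c$ produces an element of $\PP$, namely $Q := S[i\dd 2c-i]$. The inequality $c < i+2\ell'$ ensures $2c-i \le j$, so $Q$ is a fragment of $F$ starting at $i$ and symmetric about $c$ by construction. To certify $Q$ is a palindrome, I argue $M \supseteq Q$: if $M$ exceeds $F$ on the left, this is immediate from symmetry of $M$ about $c$; if $M$ exceeds $F$ on the right, then the radius of $M$ must exceed $j-c > 2\ell'-1$, which together with $c-i < 2\ell'$ again forces the left endpoint of $M$ to be at most $i$. Hence $Q$ is a sub-palindrome of $M$ centered at $c\in B_2$, i.e., $Q\in\PP$. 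Part (1) follows by contrapositive. For (2), if $\PP=\{U\}$, the same construction forces every $M\in\MP$ to have the same center $c$ as $U$, giving $|\MP|\le 1$; conversely, the maximal palindrome of $S$ centered at $c$ is either $U$ itself (a prefix of $F$, hence in $\MP$) or strictly longer than $U$ (in which case it must exceed $F$ on the left by symmetry about $c$, hence again in $\MP$), giving $|\MP|=1$.

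The main bookkeeping obstacle is uniformly handling both parities of palindrome lengths (integer vs.\ half-integer centers) when verifying the length lower bound $|P|\ge 2\ell'+1$ and the containment $M\supseteq Q$ in the right-exceed case of the folding argument; after that, the proof uses only \Cref{fct:per-pal} together with the trivial fact that a prefix of length at least $p$ of a string with period $p$ inherits that period, and in particular no invocation of Fine--Wilf is required.
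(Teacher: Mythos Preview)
Your argument is correct and follows essentially the same route as the paper. For (3) you reproduce the paper's proof almost verbatim: both of you use \Cref{fct:per-pal} to turn any $P\in\PP$ into a period $|V|-|P|$ of the longest element $V$, then use the length window $[2\ell'+1,4\ell']$ (coming from ``starts at $i$, center in $B_2$'') to force $p<|P|$, and finally inherit the period to the prefix $P$. For (1)--(2) the paper simply asserts that every $M\in\MP$ contains a concentric element of $\PP$ and calls both items ``immediate''; your folding construction $Q=S[i\dd 2c-i]$ is exactly this concentric element made explicit, and you additionally spell out the $|\MP|\ge 1$ direction of (2), which the paper leaves to the reader.

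One small omission: in your containment argument $M\supseteq Q$ you only treat the cases where $M$ exceeds $F$ on the left or on the right, but $\MP$ also contains maximal palindromes that are \emph{prefixes} of $F$ without exceeding it (e.g., when $i=0$). That case is trivial---if $M$ starts at $i$ and is centered at $c$ then $M=Q$---and you clearly know this from your handling of the converse in (2), so just add the one-line remark.
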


\begin{proof}
By the decomposition of $F$ into $B_1B_2B_3B_4$, every $M\in \MP$ has as a \emph{substring} a palindrome $P\in \PP$ with the same center (in $B_2$). Thus the first two items are immediate.

For $|\PP|\geq 2$, let us denote the longest palindrome in $\PP$ by $P_1$. 
Let $P_2$ denote any other palindrome in $\PP$. 
Let $n_1 := |P_1|$ and $n_2:=|P_2|$.
By \Cref{fct:per-pal}, $(n_1-n_2)$ is a period of $P_1$.
By definition, the smallest period $p$ of $P_1$ must be less than or equal to any other period of $P_1$. Thus, we have $p \leq n_1 - n_2$.

To show that $P_2$ has a period $p$, we must prove that $P_2[i] = P_2[i+p]$, for all $0 \leq i < n_2 - p$. Since $P_2$ is a prefix of $P_1$, the corresponding letters are identical for the entire length of $P_2$. By the definition of period, since $p$ is a period of $P_1$, we know that $P_1[i] = P_1[i+p]$, for all $0 \leq i < n_1 - p$.

From $n_2 < n_1$, we have $n_2 - p < n_1 - p$. If we assume that $n_2 - p$ is positive, this shows that the range of periodicity for $P_2$ is fully contained within the range where $P_1$'s periodicity is guaranteed. This means that the periodic property of $P_1$ holds over the entire length of its prefix $P_2$. Therefore, for any $i$ in the range $0 \leq i < n_2 - p$, we have
$P_2[i] = P_1[i] = P_1[i+p] = P_2[i+p]$. This proves that $P_2$ has a period $p$.

However, we still need to prove that $n_2 - p$ is positive; i.e., $n_2 > p$.
Recall $p \leq n_1 - n_2$.
The centers of $P_1$ and $P_2$ lie within $[\ell',2\ell')$ on $F$. This implies that the difference in their lengths is at most $2\ell'$: $n_1 - n_2 \leq 2\ell'$.
The center $c_2$ of $P_2$ is such that $c_2 \geq \ell'$. This gives a lower bound for the length of $P_2$: $n_2 \geq 2\ell' + 1$.

Assuming for contradiction that $n_2 \leq p$, we combine the bounds from the steps above: $2\ell' + 1 \leq n_2 \leq p \leq n_1 - n_2 \leq 2\ell'$.
This leads to the contradiction: $2\ell' + 1 \leq 2\ell' \implies 1 \leq 0$.
Thus, $n_2 > p$ always holds.
\end{proof}

The crucial algorithmic implication of \Cref{lem:structural}
is \Cref{lem:LCPs}: we can reduce the computation of $\MP$ to $\cO(1)$ LCP queries by accessing only $F$. Our block decomposition will be precisely based on this result.

\begin{lemma}\label{lem:LCPs}
    Let $S$ be a string of length $n$. Given read-only access to an arbitrary fragment $F=S[i\dd j]=B_1B_2B_3B_4$ of $S$ of length $\ell=4\ell'$, with $\ell'=|B_1|=|B_2|=|B_3|=|B_4|>0$, 
    all palindromes in $\MP$ can be computed in $\cO(|F|)$ time and space plus the time and space to answer $3$ LCP queries on string $S':=S\cdot S^R$. 
\end{lemma}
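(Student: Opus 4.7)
My plan is to compute $\PP$ locally inside $F$ using a linear-time palindrome algorithm, then dispatch on $|\PP|$ via \Cref{lem:structural}, handling each case with $\cO(|F|)$ extra work plus a constant number of LCP queries on $S'$.

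First, I would run Manacher's algorithm on $F$ alone, in $\cO(|F|)$ time and space, to obtain, for every center $c\in B_2$, the longest palindrome centered at $c$ that lies inside $F$. Since $F$ is accessed in a read-only fashion, this is allowed; in $\cO(1)$ per center I can then test whether the palindrome is a prefix of $F$ (i.e., whether its left endpoint is $i$), which identifies $\PP$ together with the length of every element. A further linear pass over $B_2$ sorts $\PP$ by length and exposes the longest element $P_1$.

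Second, I would invoke \Cref{lem:structural}. If $|\PP|=0$, then $\MP=\emptyset$ by part~(1) and nothing is output. If $|\PP|=1$, then by part~(2) exactly one maximal palindrome of $\MP$ exists, centered at the unique center of $\PP$, and I compute it with a single LCP query on $S'$ using the reduction recalled in \Cref{sec:prel}.

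Third, for $|\PP|\geq 2$, all palindromes in $\PP$ share the smallest period $p$ of $P_1$. I extract $p$ as the length difference between $P_1$ and the second-longest element of $\PP$ (justified by \Cref{fct:per-pal}), in $\cO(|F|)$ time. I then spend two LCP queries on $S'$ to extend the period-$p$ run that carries $P_1$ maximally to the left and to the right inside $S$, producing a maximal period-$p$ window $W=S[i-\alpha\dd i+|P_1|-1+\beta]$ bordered on each side by either a character mismatch or an end of $S$. For every center $c\in B_2$ that contributes to $\PP$, the corresponding maximal palindrome in $S$ can then be read off in $\cO(1)$ from $c,\alpha,\beta,p$, because inside $W$ a palindrome centered at $c$ extends symmetrically until it meets one of the two borders of $W$. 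Iterating over the $\cO(\ell')$ candidate centers totals $\cO(|F|)$ time, and a third LCP query is in reserve for the edge case in which one side of the extension meets the boundary of $S$ rather than a mismatch.

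The main obstacle I expect is the combinatorial claim underpinning the third case: that once the maximal period-$p$ run around $F$ is known, every $M\in\MP$ is determined by the two borders of that run and hence computable in $\cO(1)$ per center. This rests on the classical fact that within a run of period $p$, a palindromic extension from a fixed center propagates exactly until a border is met, combined with the inequality $n_2>p$ from the proof of \Cref{lem:structural}, which guarantees that every $P\in\PP$ is long enough to transfer the full period $p$ of $P_1$ into its own interior and thus faithfully into its maximal extension in $S$.
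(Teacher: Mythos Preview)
Your overall approach matches the paper's: compute $\PP$ with Manacher, branch on $|\PP|$ via \Cref{lem:structural}, and in the periodic case spend two LCP queries to delimit the maximal period-$p$ run $W$ around $F$, then read off each $M\in\MP$ in $\cO(1)$ from the borders of $W$.

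There is, however, a genuine gap in your edge-case analysis. You reserve the third LCP query for ``the edge case in which one side of the extension meets the boundary of $S$ rather than a mismatch.'' That is not the case that requires an extra query: if one side of $W$ is the boundary of $S$, the palindrome simply cannot extend past it, and the $\cO(1)$ formula from the two borders still applies. The case you are missing is the one the paper calls Case~A: the (at most one) center $c$ for which the symmetric extension reaches \emph{both} borders of $W$ simultaneously. At such a center, the two characters just outside $W$ each break the period $p$, yet they may still be equal \emph{to each other}, so the maximal palindrome at $c$ can strictly overshoot $W$ on both sides. Your claim that ``inside $W$ a palindrome centered at $c$ extends symmetrically until it meets one of the two borders of $W$'' is correct only when one border is met strictly before the other (then \Cref{fct:max} applies); it fails when both are met at once, and that is precisely where the third LCP query on $S'$ is spent. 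Without this correction, your algorithm would output a wrong length for that one center.
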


Before proving \Cref{lem:LCPs}, we prove some standard facts relating palindromes to periodicity.

    \begin{fact}\label{fct:ext}
    If string $P$ with a period $p$ is a palindrome and string $S=cPc'$, for two letters $c,c'$, has $p$ as a period too, then $S$ is a palindrome.    
    \end{fact}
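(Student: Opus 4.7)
The plan is to reduce everything to showing that the two flanking letters $c$ and $c'$ are equal; once this is established, the palindromicity of $S=cPc'$ follows immediately from the palindromicity of $P$. Indeed, write $n=|P|$ so that $|S|=n+2$, with $S[0]=c$, $S[n+1]=c'$ and $S[1\dd n]=P$. For any inner index $1\leq i\leq n$, the palindromicity of $P$ gives $S[i]=P[i-1]=P[n-i]=S[n+1-i]$, so the mirror condition for $S$ already holds on all positions except the outermost pair.

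It remains to verify $S[0]=S[n+1]$. Since $p$ is a period of $P$, we have $1\leq p\leq n$, and hence both $p$ and $n+1-p$ are valid indices of $S$. Using that $p$ is a period of $S$, I would write
\[
c \;=\; S[0] \;=\; S[p] \;=\; P[p-1]
\qquad\text{and}\qquad
c' \;=\; S[n+1] \;=\; S[n+1-p] \;=\; P[n-p].
\]
Now applying the palindromicity of $P$ to the index $p-1$ yields $P[p-1]=P[(n-1)-(p-1)]=P[n-p]$, and therefore $c=c'$. Combined with the inner-index argument, this shows $S[i]=S[n+1-i]$ for every $0\leq i\leq n+1$, so $S$ is a palindrome.

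There is no real obstacle here: the only thing to keep an eye on is that the indices $p$ and $n+1-p$ stay inside $S$, which is guaranteed by $1\leq p\leq n$ (the lower bound because periods are positive, the upper bound because $p$ is a period of the nonempty string $P$). Everything else is a direct, one-line application of the two hypotheses.
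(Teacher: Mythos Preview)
Your proof is correct and follows essentially the same approach as the paper's own proof: both arguments use the periodicity of $S$ to deduce $c=P[p-1]$ and $c'=P[n-p]$, then the palindromicity of $P$ to conclude $c=c'$, and finally observe that $cPc'$ with $c=c'$ and $P$ a palindrome is itself a palindrome. Your version is simply more explicit about the index bookkeeping and the validity of the indices $p$ and $n+1-p$.
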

    \begin{proof}
     Let $n := |P|$. Since $S = cPc'$ has a period $p$, $c = P[p-1]$ and $c' = P[n-p]$. Since $P$ is a palindrome, $P[p-1] = P[n-p]$.
     By combining these three results we have: $c = P[p-1] = P[n-p] = c'$.
     Since $c=c'$, and $P$ is a palindrome, the string $S=cPc'$ is also a palindrome, by the definition of a palindrome.
    \end{proof}
    \begin{fact}\label{fct:max}
    If string $P$ with a period $p$ is a palindrome, string $cP$, for a letter $c$, has $p$ as a period, and string $Pc'$, for a letter $c'$, does not have $p$ as period, then $S=cPc'$ is not a palindrome.
    \end{fact}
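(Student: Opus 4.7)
The plan is to argue by contradiction: assume $S=cPc'$ is a palindrome and extract a pointwise contradiction at a specific index of $P$. Set $n:=|P|$ and note that all three inputs ($P$, $cP$, $S=cPc'$) have period $p$ \emph{a priori}, while $Pc'$ does not.

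First I would use the period assumption on $cP$. Writing out the defining relation $(cP)[0]=(cP)[p]$ at the leftmost position immediately gives $c=P[p-1]$. Next I would extract information from the failure of $Pc'$ to have period $p$. Since $P$ itself has period $p$, every period constraint $P[i]=P[i+p]$ for $0\le i<n-p$ is already satisfied; so the only place where the period condition on $Pc'$ can fail is the new rightmost comparison at index $i=n-p$, which forces $c'\ne P[n-p]$. Third, the palindrome property of $P$ yields $P[p-1]=P[n-1-(p-1)]=P[n-p]$. Combining these three pointwise equalities already gives $c=P[p-1]=P[n-p]\ne c'$, so $c\ne c'$, and hence $S=cPc'$ cannot be a palindrome (its first and last letters differ). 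This is the contradiction needed.

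I expect the only substantive step to be the second one: correctly pinning down \emph{which} period comparison is violated when one extends $P$ on the right by $c'$. Once one sees that all internal comparisons of $Pc'$ are inherited from $P$ (which has period $p$), the violation is localized at exactly $i=n-p$, and the three observations slot together immediately through the ``reversal index'' identity $n-1-(p-1)=n-p$. No case analysis on parity or on the relationship between $p$ and $n$ is needed, since the argument uses only the single index $n-p$, which is well-defined as a position of $P$ as long as $p\le n$ (guaranteed by $p$ being a period of $P$).
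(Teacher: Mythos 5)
Your proposal is correct and follows essentially the same route as the paper's proof: derive $c=P[p-1]$ from the period of $cP$, derive $c'\neq P[n-p]$ from the failure of $Pc'$ to have period $p$ (your localization of the violated comparison at $i=n-p$ is a nice explicit justification the paper leaves implicit), and use the palindrome identity $P[p-1]=P[n-p]$ to conclude $c\neq c'$. One stray remark: your claim that $S=cPc'$ has period $p$ ``a priori'' is not part of the hypothesis, but your argument never uses it, so nothing breaks.
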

    \begin{proof}
Let $n := |P|$. By the statement, we have $c = P[p-1]$ and $c' \neq P[n-p]$.
Since $P$ is a palindrome, $P[i] = P[n-1-i]$, for all valid indices $i$. 
For the index $i=p-1$, this gives $P[p-1] = P[n-1-(p-1)] = P[n-p]$.
Combining these results, we get: $c = P[p-1] = P[n-p] \neq c'$.
This shows that the first and last letters of $S=cPc'$ are not equal ($c \neq c'$), and therefore $S$ is not a palindrome, by the definition of a palindrome.
\end{proof}

\begin{proof}[Proof of \Cref{lem:LCPs}]
    We compute $\PP$ in $\cO(|F|)$ time using Manacher's algorithm and apply \Cref{lem:structural}.

    If $|\PP|=0$, then $|\MP|=0$.
    If $|\PP|=1$, then we know that $|\MP|=1$, and proceed as follows.
    We find the center $c$ of $P\in\PP$ in $B_2$ and ask     
    $\textsf{LCP}_{S'}(c,2n-c-1)$, if $P$ is odd-length palindrome, or $\textsf{LCP}_{S'}(\lceil c \rceil,2n-\lceil c \rceil)$, if $P$ is even-length palindrome, to find the corresponding $M\in \MP$ with center $c$. 
    
    If $|\PP|\geq 2$, then by \Cref{lem:structural} and \Cref{fct:per-pal}, 
    every palindrome in $\PP$ has 
    period $p:=|P_1|-|P_2|$, 
    where $P_1$ and $P_2$ are the longest and second longest palindromes in $\PP$, respectively.
    We check how long this period $p$ extends to the left and to the right of $F$ 
    by asking two $\textsf{LCP}_{S'}$ queries. 
    For the left part, we ask $a:=\textsf{LCP}_{S'}(2n-i-p,2n-i)$; 
    and for the right part, we ask $b:=\textsf{LCP}_{S'}(i,i+p)$.
    The periodic fragment of $S$ with period $p$ is thus $S[i-a\dd i+b-1]$.
    For every $P\in \PP$ and $M\in \MP$ with the same center $c$, 
    we have two cases:
    \begin{description}
        \item \textbf{Case A}: $|P|=b-a$. The left and right endpoints of the periodic fragment are reached \emph{simultaneously} from the center $c$ of palindrome $P \in \PP$ with $|P|=b-a$. We ask $\textsf{LCP}_{S'}(c,2n-c-1)$ if $P$ is odd-length palindrome, or $\textsf{LCP}_{S'}(\lceil c \rceil,2n-\lceil c \rceil)$ if $P$ is even-length palindrome, to find $M\in\MP$ with center $c$. This is correct because $M$ is a palindrome by \Cref{fct:ext} and it is maximal by the LCP query definition.
        \item \textbf{Case B}: $|M|=\min(|P|+2a,2b-|P|)$. We reach one of the two endpoints of the periodic fragment \emph{first} coming from the center of $P$. If periodicity breaks first on the left, $|M|=|P|+2a$; otherwise,
        \begin{align*}
|M| &= (\text{left extension}) + |P| + (\text{right extension}) \\
&= (b - |P|) + |P| + (b - |P|) \\
&= 2b - |P|
\end{align*}
        This is correct because $M$ is a palindrome by \Cref{fct:ext}; and it is maximal by \Cref{fct:max}  and symmetry.
    \end{description}
\end{proof}

\begin{remark}
A lemma very similar to \Cref{lem:LCPs} was also shown by Charalampopoulos et al.~\cite{DBLP:conf/cpm/Charalampopoulos22}, who considered the LPS problem in a very different setting: string $S\in\Sigma^n$ is given in packed form and the task is to solve LPS sequentially in sublinear time for a small alphabet $\Sigma$. The authors showed a version of \Cref{lem:LCPs} in a special setting; that is, for $\ell':=\max(1,\frac{1}{8}\log_\sigma n)$, with $\sigma=|\Sigma|$, and for computing only a \emph{longest} palindrome in $\MP$. Our proof is also considerably simpler---precisely due to the fact that we have no constraint on $\ell'$.    
\end{remark}

\paragraph{The Algorithms.} Armed with the above combinatorial insights, we are now in a position to state our main algorithm. We decompose $S$ into a sequence of length-$\ell'$ \emph{blocks} with $\ell':=n^{1-\epsilon}$. These blocks do not overlap. We also split $S$ into a sequence of length-$\ell$ \emph{superblocks}, with $\ell:=4\ell'$, such that every two consecutive superblocks overlap by $3\ell'$ positions. We use $n^{\epsilon}$ machines, called \emph{block} machines. Each machine is assigned one such superblock. Thus, the memory used by every machine is $\cO(n^{1-\epsilon})$. The goal is to compute the maximal palindromes that are centered within the \emph{second block} of each superblock. Observe that any maximal palindrome centered at the first or last $2n^{1-\epsilon}$ positions of $S$ can be of length at most $4n^{1-\epsilon}=\ell$. Thus, the first and last machines can finish this assignment locally using Manacher's algorithm~\cite{DBLP:journals/jacm/Manacher75}. See \Cref{fig:block-decomposition} that illustrates this \emph{block decomposition}.

\begin{figure}[ht]
    \centering
    \includegraphics[width=0.85\linewidth]{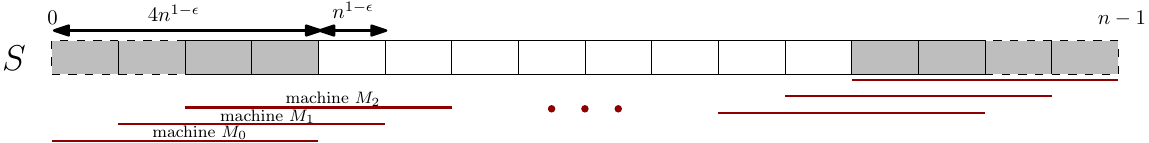}
    \caption{Illustration of the \emph{block decomposition}. The top (black) lines show the size of the (overlapping) superblocks and
    the size of the (non-overlapping) blocks. The bottom (dark red) lines show how the superblocks are assigned to the machines. Thus, we use $n^{\epsilon}$ machines with local memory $\cO(n^{1-\epsilon})$ for this decomposition.}
    \label{fig:block-decomposition}
\end{figure}

\Cref{lem:LCPs} leads to a simple and elegant algorithm because each block machine determines $\cO(1)$ LCP queries by \emph{accessing its local superblock}. This is a key difference from the algorithm by Gilbert et al., where a more complicated method involving ranges of blocks is used to determine the set of LCP queries. For every second block, denoted by $B_2$, in a superblock $F$, we will compute: (i) the maximal palindromes of $F$ with centers in $B_2$ computed locally; and (ii) the maximal palindromes of $S$ computed by applying \Cref{lem:LCPs} to $F$ with $\ell:=4n^{1-\epsilon}$. For (i), we use Manacher's algorithm on $F$ to compute all maximal palindromes.

Thus, what remains is to show how we can fully implement \Cref{lem:LCPs} in our setting; i.e., how we can answer the LCP queries.
For this, we will resort to a second (folklore) decomposition of string $S':=S\cdot S^R$, called \emph{modular decomposition}, which was also used by Gilbert et al. We make use of $n^\epsilon$ machines.
The machine $M_x$ with ID $x$, for all $x\in [0,n^{\epsilon})$, stores the KR fingerprint of every fragment of length $n^\epsilon$ starting at position $i$ in $S'$ so that $i \mod n^\epsilon = x$.
Thus, the memory used by every machine is $\cO(n^{1-\epsilon})$ and the total memory used is $\cO(n)$. The total time to compute all fingerprints is $\cO(n)$ by using \Cref{fct:fastKR}. The key is to apply \Cref{lem:LCPs} per machine, which determines the $\cO(1)$ LCP queries requested per machine. Even if all these queries are assigned to a single machine to answer them, their number would never exceed the $\cO(n^{1-\epsilon})$ local memory of that machine; indeed, recall $\epsilon \leq 0.5$. However, the total size of the messages for these requests may exceed the local memory of the machine, and so to balance these requests
we resort to \emph{data replication}.
Every machine (evenly) decomposes its data into as many fragments as the number of requested messages, and sends these fragments to as many different machines, which then replicate the data in as many copies. Since a machine's local memory is sufficient to store a message from every other machine in a single communication round, it is straightforward to arrange where to send and from where to retrieve the data. Thus each machine sends or receives data of $\cO(n^{1-\epsilon})$ size in $\cO(1)$ rounds. Moreover, the data replication takes $\cO(n)$ total time. See~\Cref{fig:replicate} that illustrates the data replication.

\begin{figure}[ht]
\begin{subfigure}{0.5\textwidth}
\includegraphics[width=0.9\linewidth]{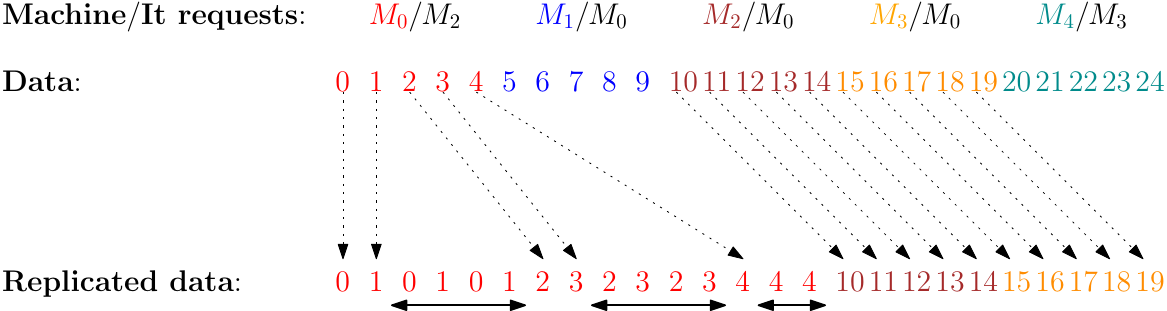} 
\caption{How data is replicated locally.}
\label{fig:replicate1}
\end{subfigure}
\begin{subfigure}{0.5\textwidth}
\includegraphics[width=0.9\linewidth]{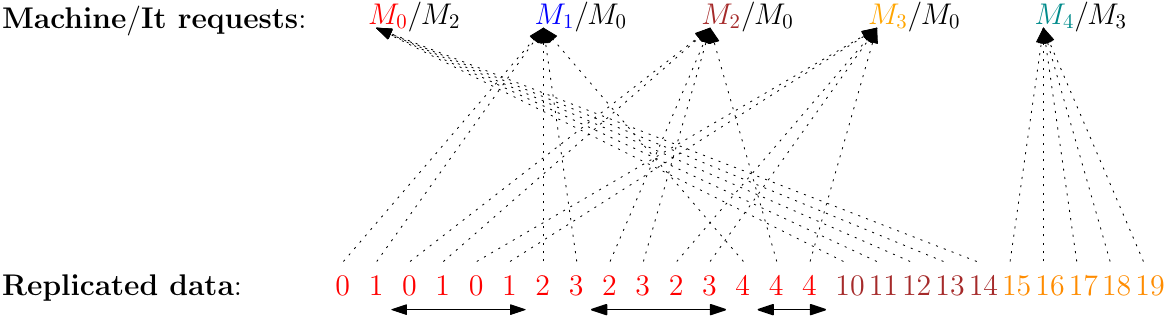}
\caption{How data is then received by the machines.}
\label{fig:replicate2}
\end{subfigure}
\caption{An example of the \emph{data replication} process with $n=25$ and $\epsilon=0.5$. Each machine sends or receives messages of $\cO(\sqrt{n})$ size in $\cO(1)$ rounds. Moreover, the data replication takes $\cO(n)$ total time.}
\label{fig:replicate}
\end{figure}

Armed with the above tools, we proceed to answering the LCP queries as follows.
For a single $\textsf{LCP}_{S'}(i,j)$ query, the machine with ID $x$ requests
the $\cO(n^{1-\epsilon})$ KR fingerprints of the fragments of length $n^{\epsilon}$
that compose $S'[i\dd]$ and $S'[j\dd]$ to be stored locally. This request can be implemented in $\cO(1)$ rounds, because, by the modular decomposition, the fingerprints are stored on at most two different machines: on machines with IDs $y=i \mod n^\epsilon$ and $z=j \mod n^\epsilon$; and to guarantee that no machine sends fingerprints that exceed its local memory, we employ data replication as described above.
By comparing these fingerprints in $\cO(n^{1-\epsilon})$ time, we determine the two length-$n^{\epsilon}$ fragments into which $S'[i\dd]$ and $S'[j\dd]$ no longer match. We can then compare these fragments letter by letter
by sending a request to the $\cO(1)$ \emph{block} machines that contain the two (non-hashed) fragments; again, to guarantee that no machine sends letters that exceed its local memory, we employ data replication. By \Cref{lem:LCPs}, $\cO(1)$ total rounds are required and, by \Cref{fct:KR}, the result is correct w.h.p. We have arrived at \Cref{the:main}.

\begin{theorem}\label{the:main}
For any $\epsilon \in (0,0.5]$,
there is an MPC algorithm
computing all maximal palindromes in $S\in\Sigma^n$ in $\cO(1)$ rounds. 
The total time and memory are $\cO(n)$, with $\cO(n^{1-\epsilon})$ memory per machine, w.h.p.
\end{theorem}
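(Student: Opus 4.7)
The plan is to compose two decompositions of the input — a \emph{block decomposition} of $S$ to localize the combinatorial work, and a \emph{modular decomposition} of $S':=S\cdot S^R$ to answer LCP queries — and to stitch them together with a \emph{data replication} routine that balances the all-to-all communication within the per-machine memory budget.

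First, I would partition $S$ into $n^\epsilon$ non-overlapping \emph{blocks} of length $\ell':=n^{1-\epsilon}$, and into overlapping length-$4\ell'$ \emph{superblocks}, assigning each superblock to one of $n^\epsilon$ \emph{block machines}. The first and last machines can finish locally via Manacher's algorithm, because any maximal palindrome centered near the boundary has length at most $4\ell'$. For every other superblock $F=B_1B_2B_3B_4$, the task reduces to computing all maximal palindromes of $S$ centered in $B_2$: those that stay inside $F$ are produced by running Manacher on $F$, and those exceeding $F$ are produced by applying \Cref{lem:LCPs}, which prescribes $\cO(1)$ $\textsf{LCP}_{S'}$ queries whose endpoints are determined from $F$ alone.

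Next, I would answer those LCP queries by distributing KR fingerprints of $S'$ across $n^\epsilon$ \emph{modular} machines, where machine $M_x$ owns every length-$n^\epsilon$ fingerprint whose starting position is $\equiv x \pmod{n^\epsilon}$. By \Cref{fct:fastKR} these fingerprints are built in $\cO(n)$ total time. An $\textsf{LCP}_{S'}(i,j)$ query is then served by pulling the $\cO(n^{1-\epsilon})$ fingerprints covering $S'[i\dd]$ and $S'[j\dd]$ (which sit on at most two modular machines), scanning them to find the first mismatched length-$n^\epsilon$ block, and finally requesting the two raw blocks from the owning block machines to resolve the mismatch character by character. By \Cref{fct:KR} this is correct w.h.p.

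The main obstacle I anticipate is load balancing: although each block machine issues only $\cO(1)$ queries, the responses could concentrate on a few modular machines and exceed their local memory within a single round. The plan is to interpose a round of data replication — each sender splits its payload evenly among helper machines that mirror it to the requesters — so that every machine sends and receives $\cO(n^{1-\epsilon})$ bytes in $\cO(1)$ rounds. This is precisely where $\epsilon\le 0.5$ is used: it guarantees $n^\epsilon\le n^{1-\epsilon}$, so a machine can accommodate one message from each other machine per round. Combining the $\cO(1)$ rounds of fingerprint construction, LCP querying, and replication with the linear-time local Manacher passes yields $\cO(1)$ rounds, $\cO(n)$ total time and memory, and the claimed per-machine memory, succeeding w.h.p.
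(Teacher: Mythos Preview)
Your proposal is correct and follows essentially the same approach as the paper: the same block/superblock decomposition with Manacher plus \Cref{lem:LCPs} locally, the same modular decomposition of $S'$ into length-$n^\epsilon$ fingerprints to answer LCP queries, and the same data-replication round to balance load, with the $\epsilon\le 0.5$ constraint invoked exactly where the paper invokes it. There is nothing substantive to add.
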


In the AMPC model, we can bypass the constraint $\epsilon \leq 0.5$ arising from modular decomposition and data replication, through a few simple modifications. This is a key difference from the algorithm by Gilbert et al., who need the constraint for their block decomposition too. Indeed, they solve LPS in the AMPC model by reducing it to suffix tree construction, which induces polylogarithmic factors and is significantly more involved.

\begin{corollary}\label{coro:main}
For any constant $\epsilon \in (0,1)$,
there is an AMPC algorithm computing all maximal palindromes in $S\in\Sigma^n$ in $\cO(1)$ rounds. The total time and memory are $\cO(n)$, with $\cO(n^{1-\epsilon})$ memory per machine, w.h.p.
\end{corollary}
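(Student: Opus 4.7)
The plan is to adapt the MPC algorithm of \Cref{the:main} by surgically replacing only the subroutines that invoked the constraint $\epsilon\leq 0.5$, while reusing every other piece verbatim. The block decomposition is unchanged: each of the $n^\epsilon$ block machines stores one length-$4n^{1-\epsilon}$ superblock, runs Manacher locally, and uses \Cref{lem:LCPs} to determine its $\cO(1)$ LCP queries. This step never depended on $\epsilon\leq 0.5$.

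The first modification is to drop data replication entirely. In MPC, replication was required because a single fingerprint holder could be flooded by simultaneous requests from many block machines, forcing the number-of-machines-to-memory inequality. In AMPC, the fingerprints that would have been transmitted sit in the shared read-only storage written at the end of the setup round; any block machine reads the fingerprints it needs in the next round without loading a specific sender, so replication and the half of the $\epsilon\leq 0.5$ constraint it induced both vanish.

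The second modification addresses the letter-by-letter comparison at the bottom of an LCP query, which in \Cref{the:main} worked on a length-$n^\epsilon$ fragment and thus needed $n^\epsilon\leq n^{1-\epsilon}$. I would cascade the modular decomposition into $d := \lceil \epsilon/(1-\epsilon)\rceil=\cO(1)$ nested levels: for each $t\in[1,d]$ precompute Karp-Rabin fingerprints on $S'=S\cdot S^R$ for length-$L_t:=n^{1-t\epsilon/d}$ fragments using the same modular scheme as before, so that $L_{t-1}/L_t=n^{\epsilon/d}\leq n^{1-\epsilon}$ and $L_d\leq n^{1-\epsilon}$. A query $\textsf{LCP}_{S'}(i,j)$ is then resolved by progressing through levels $t=1,\ldots,d$: at level $t$ the requesting machine reads $\cO(L_{t-1}/L_t)\leq \cO(n^{1-\epsilon})$ fingerprints directly from shared storage, narrows the mismatch down to a unique length-$L_t$ fragment, and continues. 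After level $d$ the mismatch sits inside a length-$L_d\leq n^{1-\epsilon}$ fragment that fits in local memory and is scanned letter by letter exactly as in \Cref{the:main}. Because $d$ is a constant and each level costs $\cO(1)$ rounds of reads and aggregation, the overall round complexity remains $\cO(1)$.

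The main obstacle is the routine but careful accounting of memory and time across levels. Per level, each of the $\cO(n^\epsilon)$ concurrent queries contributes $\cO(n^{1-\epsilon})$ shared-storage reads, giving $\cO(n)$ reads per level and, summed across $d=\cO(1)$ levels, $\cO(n)$ total time; each level's fingerprint table also uses $\cO(n)$ storage, so the aggregate memory stays $\cO(n)$ and every machine respects its $\cO(n^{1-\epsilon})$ local memory bound. Correctness with high probability is inherited verbatim from \Cref{fct:KR}, as in the proof of \Cref{the:main}.
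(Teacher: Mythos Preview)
Your plan is largely sound and would reach the stated bounds, but it takes a more elaborate route than the paper and leaves one step unjustified. Like you, the paper reuses the block decomposition and \Cref{lem:LCPs} verbatim and only replaces the LCP machinery. Instead of a $d$-level cascade of fingerprint tables, it writes all \emph{prefix} fingerprints $\phi(S'[0\dd i])$, $i\in[0,2n)$, to shared storage via a standard parallel prefix sum (using \Cref{fct:fastKR} as the associative operator) in $\cO(\log_{s}n)=\cO(1/(1-\epsilon))=\cO(1)$ rounds with $s:=n^{1-\epsilon}$. Any fragment fingerprint is then obtainable in $\cO(1)$ time from two prefix values by \Cref{fct:fastKR}, so each $\textsf{LCP}_{S'}$ query is answered by an adaptive binary search in $\cO(\log n)$ reads within a single round, giving total work $\cO(n+n^{\epsilon}\log n)=\cO(n)$ and per-machine overhead $\cO(\log n)=\cO(n^{1-\epsilon})$.

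The gap in your argument is the line ``precompute Karp-Rabin fingerprints for length-$L_t$ fragments using the same modular scheme as before.'' In the MPC proof that scheme was viable because the fragment length $n^\epsilon$ was at most the superblock length $4n^{1-\epsilon}$, so every fingerprint could be produced locally. For $t<d$ you have $L_t>n^{1-\epsilon}$, and no machine holds a full length-$L_t$ fragment; building these tables bottom-up from level~$d$ would cost $\Theta(n\cdot n^{\epsilon/d})$ work per level, exceeding $\cO(n)$. The obvious patch is to compute prefix fingerprints first (by a prefix sum, exactly as the paper does) and derive each length-$L_t$ fingerprint from two prefix values---but once prefix fingerprints are in shared storage, the multi-level cascade is superfluous, since a direct binary search already answers every LCP query in $\cO(\log n)$ adaptive reads, which is both simpler and cheaper per query than your $\cO(n^{1-\epsilon})$ reads.
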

\begin{proof}
We follow the algorithm underlying \Cref{the:main}, with $n^{\epsilon}$ machines, up to determining the $\cO(1)$ LCP queries on string $S':=S\cdot S^R$ per machine. Instead of the modular decomposition, we proceed as follows. Let $s:=n^{1-\epsilon}$.
Using a prefix-sum algorithm~\cite{DBLP:conf/isaac/GoodrichSZ11} augmented with~\Cref{fct:fastKR}, we can store $\phi(S'[0\dd i])$, for all $i\in[0,2n)$, in the shared memory using $\cO(\log_{s} n)=\cO(1/(1-\epsilon))=\cO(1)$ rounds and $\cO(n\log_{s} n)=\cO(n/(1-\epsilon))=\cO(n)$ total time and memory. We can then compute the KR fingerprint of any arbitrary fragment of $S'$ in $\cO(1)$ time w.h.p., because, by~\Cref{fct:fastKR}, we can compute $\phi(S'[i\dd j])$
using $\phi(S'[0\dd i-1])$ and $\phi(S'[0\dd j])$.
For any two suffixes of $S'$, we binary search \emph{adaptively} to find their LCP value. Since every machine asks $\cO(1)$ LCP queries, the total time and memory is $\cO(n+n^{\epsilon}\log n)=\cO(n)$, with $\cO(n^{1-\epsilon}+\log n)=\cO(n^{1-\epsilon})$ memory per machine.    
\end{proof}

\section{Concluding Remarks}\label{sec:con}

The purpose of this paper is twofold. From the results perspective, our algorithms may inspire further work on classical string matching problems in MPC or AMPC, and they may serve as a useful guide for parallel implementations. From a pedagogical standpoint, our paper may serve as material for a class on parallel and distributed algorithms showcasing a remarkably simple and optimal algorithm obtained through combinatorial insight. We propose the following two questions for future work:

\begin{itemize}
    \item Is there an efficient \emph{deterministic} algorithm for the LPS problem in the MPC model?
    \item Is there an efficient algorithm for the LPS problem in the MPC model \emph{for any} $\epsilon \in (0,1)$?
\end{itemize}

\section*{Acknowledgments} This work was supported by the PANGAIA and ALPACA projects that have received funding from the European Union’s Horizon 2020 research and innovation programme under the Marie Skłodowska-Curie grant agreements No 872539 and 956229, respectively.

\bibliographystyle{alpha}
\bibliography{references}
\end{document}